\documentclass[10pt,twocolumn,aps,pra,superscriptaddress,showpacs,tightenlines,pdflatex,longbibliography]{revtex4-2}
\usepackage{newpxtext,newpxmath}

\let\coloneqq\relax

\usepackage[utf8]{inputenc}
\usepackage{amsthm}
\usepackage{amssymb}
\usepackage{amsmath}
\usepackage{bbold}
\usepackage{bbm}
\usepackage[pdftex, backref=page]{hyperref}
\usepackage{braket}
\usepackage{dsfont}
\usepackage{mathdots}
\usepackage{mathtools}
\usepackage{enumerate}
\usepackage[shortlabels]{enumitem}
\usepackage{csquotes}
\usepackage{stmaryrd}
\usepackage[cal=boondox]{mathalfa}
\usepackage{graphicx}
\usepackage{stackengine}
\usepackage{scalerel}
\usepackage{tensor}       %\tensor[_n]{\braket{j|\psi}}{_{1\ldots n}}
\usepackage{array}
\usepackage{makecell}
\newcolumntype{x}[1]{>{\centering\arraybackslash}p{#1}}
\usepackage{tikz}
\usepackage{pgfplots}
\usetikzlibrary{shapes.geometric, shapes.misc, positioning, arrows, arrows.meta, decorations.pathreplacing, decorations.pathmorphing, patterns, angles, quotes, calc}
\usepackage{booktabs}
\usepackage{xfrac}
\usepackage{siunitx}
\usepackage{centernot}
\usepackage{comment}
\usepackage{chngcntr}
\usepackage{caption}
\usepackage{subcaption}

\newtheorem{thm}{Theorem}
\newtheorem*{thm*}{Theorem}
\newtheorem{prop}[thm]{Proposition}
\newtheorem*{prop*}{Proposition}

\newtheorem*{lemma*}{Lemma}

\newtheorem*{cor*}{Corollary}

\newtheorem*{cj*}{Conjecture}

\newtheorem*{Def*}{Definition}

\newtheorem*{question*}{Question}

\newtheorem*{problem*}{Problem}

\newtheorem*{assumption*}{Assumption}

\makeatletter
\def\thmhead@plain#1#2#3{%
  \thmname{#1}\thmnumber{\@ifnotempty{#1}{ }\@upn{#2}}%
  \thmnote{ {\the\thm@notefont#3}}}
\let\thmhead\thmhead@plain
\makeatother

\theoremstyle{definition}
\newtheorem{rem}[thm]{Remark}

\newcommand{\bb}{\begin{equation}\begin{aligned}\hspace{0pt}}
\newcommand{\bbb}{\begin{equation*}\begin{aligned}}
\newcommand{\ee}{\end{aligned}\end{equation}}
\newcommand{\eee}{\end{aligned}\end{equation*}}
\newcommand*{\coloneqq}{\mathrel{\vcenter{\baselineskip0.5ex \lineskiplimit0pt \hbox{\scriptsize.}\hbox{\scriptsize.}}} =}

\renewcommand{\epsilon}{\varepsilon}

\newcommand{\id}{\mathds{1}}

\DeclareMathOperator{\Tr}{Tr}

\DeclareMathAlphabet{\pazocal}{OMS}{zplm}{m}{n}

\DeclareMathOperator{\supp}{supp}

\DeclareMathOperator{\diag}{diag}

\newcommand{\lsmatrix}{\left(\begin{smallmatrix}}
\newcommand{\rsmatrix}{\end{smallmatrix}\right)}

\stackMath

\stackMath

\makeatletter
\newcommand*\rel@kern[1]{\kern#1\dimexpr\macc@kerna}
\newcommand*\widebar[1]{%
  \begingroup
  \def\mathaccent##1##2{%
    \rel@kern{0.8}%
    \overline{\rel@kern{-0.8}\macc@nucleus\rel@kern{0.2}}%
    \rel@kern{-0.2}%
  }%
  \macc@depth\@ne
  \let\math@bgroup\@empty \let\math@egroup\macc@set@skewchar
  \mathsurround\z@ \frozen@everymath{\mathgroup\macc@group\relax}%
  \macc@set@skewchar\relax
  \let\mathaccentV\macc@nested@a
  \macc@nested@a\relax111{#1}%
  \endgroup
}

\counterwithin*{equation}{part}
\counterwithin*{thm}{part}
\counterwithin*{figure}{part}

\tikzset{meter/.append style={draw, inner sep=10, rectangle, font=\vphantom{A}, minimum width=30, line width=.8, path picture={\draw[black] ([shift={(.1,.3)}]path picture bounding box.south west) to[bend left=50] ([shift={(-.1,.3)}]path picture bounding box.south east);\draw[black,-latex] ([shift={(0,.1)}]path picture bounding box.south) -- ([shift={(.3,-.1)}]path picture bounding box.north);}}}
\tikzset{roundnode/.append style={circle, draw=black, fill=gray!20, thick, minimum size=10mm}}
\tikzset{squarenode/.style={rectangle, draw=black, fill=none, thick, minimum size=10mm}}

\definecolor{Blues5seq1}{RGB}{239,243,255}
\definecolor{Blues5seq2}{RGB}{189,215,231}
\definecolor{Blues5seq3}{RGB}{107,174,214}
\definecolor{Blues5seq4}{RGB}{49,130,189}
\definecolor{Blues5seq5}{RGB}{8,81,156}

\definecolor{Greens5seq1}{RGB}{237,248,233}
\definecolor{Greens5seq2}{RGB}{186,228,179}
\definecolor{Greens5seq3}{RGB}{116,196,118}
\definecolor{Greens5seq4}{RGB}{49,163,84}
\definecolor{Greens5seq5}{RGB}{0,109,44}

\definecolor{Reds5seq1}{RGB}{254,229,217}
\definecolor{Reds5seq2}{RGB}{252,174,145}
\definecolor{Reds5seq3}{RGB}{251,106,74}
\definecolor{Reds5seq4}{RGB}{222,45,38}
\definecolor{Reds5seq5}{RGB}{165,15,21}

\definecolor{aquamarine}{rgb}{0.5, 1.0, 0.83}
\definecolor{babyblue}{rgb}{0.54, 0.81, 0.94}

\allowdisplaybreaks

\usepackage[most,breakable]{tcolorbox}
	{\expandafter\ifstrequal\expandafter{#1}{orange}{\begin{tcolorbox}[colback=red!15,colframe=orange!15,breakable,enhanced]}{\begin{tcolorbox}[colback=Blues5seq1,colframe=Blues5seq5,breakable,enhanced]}}%
	{\end{tcolorbox}}

	{\expandafter\ifstrequal\expandafter{#1}{orange}{\begin{tcolorbox}[colback=red!15,colframe=orange!15,breakable,enhanced]}{\begin{tcolorbox}[colback=white,colframe=babyblue,breakable,enhanced]}}%
	{\end{tcolorbox}}

\newcommand{\Err}{\operatorname{Err}}
\newcommand{\pg}{\operatorname{pg}}
\newcommand{\HS}[2]{\left\langle #1,#2\right\rangle}

\begin{document}
\title{Sharp pairwise reduction for quantum hypothesis testing}

\author{Kuan-Yi Lee}
\email{kuanyi.lee@sns.it}
\affiliation{Scuola Normale Superiore, Piazza dei Cavalieri 7, 56126 Pisa, Italy}

\author{Ludovico Lami}
\email{ludovico.lami@sns.it}
\affiliation{Scuola Normale Superiore, Piazza dei Cavalieri 7, 56126 Pisa, Italy}

\begin{abstract}
We determine the optimal universal coefficient in a reduction of multiple to binary quantum hypothesis testing. Specifically, for every finite ensemble in any Hilbert space dimension (finite or infinite), we prove that the error probability of the global pretty good measurement (PGM) is at most four times the sum of the optimal binary error probabilities. By constructing a family of regular-simplex ensembles, we further show that the coefficient four is optimal, even for arbitrary global measurements. This result improves on the pairwise bounds established by Cheng and Liu~\href{https://arxiv.org/abs/2606.06246}{[arXiv:2606.06246 (2026)]} and entails an explicit, sharp guarantee for the standard PGM itself. Our proof is also simpler: rather than constructing sequential measurements and applying a union bound, our proof relies on purely matrix-analytic techniques, combining a direct block Gram matrix analysis of the PGM error and a refined inequality between quantum Hellinger distance and Bures $\chi^2$-divergence. Our analysis also yields a refined, one-shot pairwise Chernoff bound and explicit sufficient number of copies for a desired discrimination error.
\end{abstract}
\maketitle

\textit{Introduction.}---Multiple quantum hypothesis testing aims to identify a quantum state drawn from a specified ensemble while minimizing the average error probability~\cite{Helstrom-paper,Yuen1975,Ogawa2000,Nussbaum2011,montanaro_pretty_2019}. For general ensembles, finding the minimum error requires an optimization over all measurements, and no general closed-form expression for this error or an optimal measurement is known. Although the optimization can be solved numerically by semidefinite programming~\cite{Eldar2003}, its computational complexity grows exponentially with the number of underlying qubits (and polynomially with the number of hypotheses)~\cite{Skrzypczyk2023_book}. Further, the optimal measurement will be, in general, difficult to implement experimentally. This motivates the identification of explicit, experimentally handy measurements whose performance can be evaluated analytically.

A natural approach to simplify the above optimization is to link this global discrimination problem to its binary sub-problems~\cite{qiu_minimum-error_2008,audenaert_upper_2014}. In contrast to the multiple-hypothesis setting, binary discrimination is characterized exactly by the Holevo--Helstrom formula~\cite{Helstrom-paper}, which expresses the optimal error in terms of trace norm~\cite{MARK}. A pairwise reduction would therefore turn the binary error into a quantitative estimate for the entire ensemble. However, such a reduction is not automatic -- the measurements that attains the different binary optima are generally different, and separately optimized binary tests do not directly specify a single valid measurement for the global task. This raises two natural questions: 
\begin{enumerate}
    \item[\textit{1.}] \textit{How tightly can the error of a multiple hypothesis test be controlled by its binary counterparts?}
    \item[\textit{2.}] \textit{Can the optimal reduction be achieved with an explicit, experimentally friendly measurement?}
\end{enumerate}

In this context, Audenaert and Mosonyi conjectured the existence of a pairwise reduction with a dimension-independent coefficient $C$~\cite[Conjecture 2.3]{audenaert_upper_2014}. Their numerical evidence further suggested $C=4$ as a valid candidate when the binary errors are summed without normalization. In the asymptotic setting, Ke Li proved that the optimal error exponent for a finite family of finite-dimensional states equals the smallest pairwise quantum Chernoff distance~\cite{Li2016}. More recently, Cheng and Liu improved the methods by Ke Li and established a dimension-independent one-shot bound with coefficient $C=8$ for positive trace-class operators~\cite[Theorem 3.1]{Cheng2026}. Their proof adapts sequential binary tests on the rank-one spectral components of the hypotheses, followed by a quantum union bound on such events~\cite{Gao2015}. Although it resolves the existence conjecture by Audenaert and Mosonyi, the construction itself is experimentally difficult to realize. These studies leave open the optimal universal coefficient and which explicit measurement achieves the corresponding optimal guarantee.

In this work, we determine the \textit{optimal} universal coefficient, and furthermore show that it can be attained by the global pretty good measurement~\cite{hausladen_pretty_1994} (PGM) in all quantum systems, finite- or infinite-dimensional. Specifically, we prove that the PGM error is at most four times the sum of the optimal binary errors, and a regular-simplex ensemble further shows that no smaller coefficient can hold uniformly over the number of hypotheses, even when optimizing the global measurement.

The centrality of the PGM in quantum hypothesis testing stems first and foremost from the celebrated Barnum--Knill theorem~\cite{BarnumKnill}, which states that the associated error probability differs from that of the optimal measurement by a factor of at most two. 
This is important from the technological point of view, because the PGM is often experimentally much easier to implement than an arbitrary measurement~\cite{pretty-good}. Our findings add a further dimension to this centrality of the PGM in the theory of quantum hypothesis testing, by showing that the PGM error probability, which may be challenging to calculate for large ensembles, can be tightly bounded by looking at the binary reductions. This paves the way for further applications of this key primitive of quantum state discrimination.

The key ingredient in our proof is a constant-one inequality between squared quantum Hellinger distance and Bures $\chi^2$-divergence, improving the the coefficient obtained by combining existing inequalities~\cite[Fact 2.25 and Proposition 2.31]{Flammia2024}. In addition, the same argument further yields a refined one-shot pairwise Chernoff bound for the global PGM. Applied to multiple, independent copies, this bound gives a sufficient number of copies for discrimination at a prescribed error tolerance.\smallskip

\textit{Preliminaries.}---We first briefly recap the setting of quantum hypothesis testing. Consider two quantum states $\rho$ and $\sigma$, with the corresponding priors $p,q$, a binary test $0\le \Pi\le \id$ with corresponding outcome to decision whether the state is $\rho$ or $\sigma$. The optimal error of such a task is given by the Holevo--Helstrom formula~\cite{Helstrom-paper}:
\bb
    \Err^\star(p\rho,q\sigma)
    &\coloneqq 
    \min_{0\le \Pi\le\id}
    \left\{
        p\Tr \rho(\id-\Pi)+ q \Tr \sigma \Pi
    \right\} \\
    &=
    \frac12 \left( p + q -\|p\rho-q\sigma\|_1 \right),
\ee
where $\|X\|_1\coloneqq \Tr\sqrt{X^\dagger X}$ is the trace norm. For a finite ensemble $\{p_i,\rho_i\}_{i=1}^N$, where the $p_i$ are prior probabilities and the $\rho_i$ are density operators, we absorb the priors into $A_i\coloneqq p_i\rho_i$ for convenience. The corresponding optimal error can then be written as
\bb
    \Err^\star(A_1,\ldots,A_N)
    \coloneqq 
    \min_{\{M_i\}}
    \sum_{i=1}^N \Tr A_i(\id-M_i),
    \label{eq:optimal-error}
\ee
with POVM $\{M_i\}_i$ such that $M_i\ge 0$ and $\sum_{i=1}^N M_i=\id$. 
% We note that all of our bounds are homogeneous and also apply to arbitrary positive semi-definite $A_i$.

Apart from optimizing over all POVMs, we consider the pretty good measurement (PGM), which is determined directly by the ensemble: for a given ensemble $\{A_i\}_{i=1}^N$, set $S\coloneqq \sum_{i=1}^N A_i$; the PGM is then defined by $M_i^{\pg} \coloneqq  S^{-1/2}A_i\, S^{-1/2}$, where $S^{-1/2}$ denotes the inverse square root of the restriction of $S$ to its support, and satisfies $\sum_i M_i^{\pg} = \id$ on $\supp S$. The PGM error is
\bb
    \Err^{\pg}(A_1,\ldots,A_N)
    &\coloneqq 
    \sum_{i=1}^N\Tr A_i\left(\id -M^{\pg}_i\right)\\
    &=
    \sum_{i\neq j} \Tr A_i S^{-1/2}A_j S^{-1/2}.
    \label{eq:err_PGM}
\ee
For simplicity, we use shorthand notation for $N$-ensemble (global) errors as $\Err^\star_N$ and $\Err^{\pg}_N$, and, for pairwise error, $\Err^\star_{ij}\coloneqq \Err^\star(A_i,A_j)$.
% and $\Err^{\pg}_{ij}\coloneqq \Err^{\pg}(A_i,A_j)$. 
We note that, in particular, $\Err^\star(A_i,A_j)$ is \textit{not} a conditionally normalized binary error; rather, it includes the original priors $p_i,p_j$.\smallskip

\textit{Sharp pairwise reduction.}---Our main result controls the error associated to a global measurement with the sum of its binary errors.

\begin{thm}[(Pairwise reduction)]
\label{thm:sharp-pairwise}
For every finite ensemble $A_1,\ldots,A_N \ge 0$ in finite or infinite dimension, the PGM error satisfies
\bb
    \Err^{\star}_N
    \le
    \Err^{\pg}_N
    \le
    4\sum_{i< j}\Err^{\star}_{ij}.
\ee
The coefficient four is uniformly optimal.
\end{thm}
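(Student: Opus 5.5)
The first inequality $\Err^\star_N\le\Err^{\pg}_N$ holds by definition, since the PGM is one admissible POVM in~\eqref{eq:optimal-error}; on $(\supp S)^\perp$ every $A_i$ vanishes, so completing the PGM there costs nothing. The work is in the second inequality, and I will do it in the block Gram matrix picture. Define the column operator $V\coloneqq(A_1^{1/2},\ldots,A_N^{1/2})^\dagger$, so that $V^\dagger V=S$. The block Gram matrix $G\coloneqq VV^\dagger$ has blocks $G_{ij}=A_i^{1/2}A_j^{1/2}$. By the polar decomposition, $V S^{-1/2}$ is a partial isometry, which gives
\bb
(\sqrt{G})_{ij}=A_i^{1/2}S^{-1/2}A_j^{1/2}.
\ee
Inserting this into~\eqref{eq:err_PGM} yields
\bb
\Err^{\pg}_N=\sum_{i\ne j}\big\|(\sqrt G)_{ij}\big\|_2^2=\sum_i\Big(\Tr A_i-\big\|(\sqrt G)_{ii}\big\|_2^2\Big),
\ee
where the second form uses $(\sqrt G)^2=G$ blockwise.

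The second step removes the squares. Write $D_i\coloneqq(\sqrt G)_{ii}\ge0$; then $D_i^2\le(G)_{ii}=A_i$. I would first try to show
\bb
\Tr A_i-\Tr D_i^2\le 2\big(\Tr A_i-\Tr D_i\big),
\ee
i.e.\ $\Tr D_i-\Tr D_i^2\le \Tr A_i-\Tr D_i$. In the qubit-normalised case this is $x-x^2\le 1-x$, i.e.\ $(1-x)^2\ge0$. In general the natural route is a block-wise operator inequality combining $D_i^2\le A_i$ with operator concavity of the square root; I will pin the exact form down while doing it.

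The quantity $\sum_i\Tr(A_i-D_i)$ equals $\Tr G-\Tr\sqrt G$, which is a global Hellinger-type defect. I would bound it pairwise in two moves.
\begin{itemize}
\item \emph{Reduce to pairs.} Write $\sqrt G$ against the block-diagonal $\bigoplus_i A_i^{1/2}$, and expand the error of replacing $\sqrt G$ by the diagonal to second order in the off-diagonal blocks. The resulting term is a Bures $\chi^2$-type quantity, namely a sum over $i<j$ of $\chi^2$-divergences of $A_j$ relative to $A_i+A_j$.
\item \emph{Pass to binary errors.} Apply the refined inequality $H^2\le\chi^2_{\rm Bures}$ with constant one, as promised in the introduction, where
\bb
H^2(A_i,A_j)\coloneqq\Tr A_i+\Tr A_j-2\Tr\sqrt{A_i}\sqrt{A_j}.
\ee
Then combine it with the Fuchs--van de Graaf-type bound $\Tr\sqrt{A_i}\sqrt{A_j}\ge\ldots$, which for unnormalised operators reads $\Err^\star_{ij}\ge\tfrac12\,(\text{overlap term})$, to get each pairwise contribution $\le 2\Err^\star_{ij}$.
\end{itemize}
Together with the factor $2$ from the previous step, this gives the constant $4$. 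I expect this step to be the main obstacle: the constant must come out exactly, so neither the $\chi^2$/Hellinger comparison nor the pairing step may lose anything. If the direct Gram route loses a constant, the fallback is to prove the operator inequality $\Tr A_iS^{-1/2}A_jS^{-1/2}\le 2\Err^\star_{ij}$ term by term. That inequality is false for general $S$, so I would instead aim for a weighted version summed over $j$.

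For sharpness I would take $N$ equiprobable pure states $A_i=\tfrac1N|\psi_i\rangle\!\langle\psi_i|$ forming a regular simplex, with $\langle\psi_i|\psi_j\rangle=c$ for all $i\ne j$. The construction proceeds as follows.
\begin{itemize}
\item The Gram matrix is $(1-c)\id+cJ$, where $J$ is the all-ones matrix. For such a symmetric ensemble the PGM is optimal (it is the square-root measurement of a geometrically uniform ensemble), and its error can be computed in closed form from $\sqrt{(1-c)\id+cJ}$.
\item The binary errors are $\Err^\star_{ij}=\tfrac1N\big(1-\sqrt{1-c^2}\big)$, and there are $\binom N2$ pairs.
\end{itemize}
Choose $c=c_N\to0$ with $Nc_N$ fixed and small, so that both sides scale comparably. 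Expanding both sides to leading order, the ratio $\Err^\star_N\big/\sum_{i<j}\Err^\star_{ij}$ should tend to $4$. This shows that no constant below four holds uniformly in $N$, even for optimal global measurements.
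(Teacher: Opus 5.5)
\textbf{Upper bound.} Your block-Gram setup, the first inequality, and the identity $\Err^{\pg}_N=\sum_{i\ne j}\|(\sqrt G)_{ij}\|_2^2$ all agree with the paper. The step that ``removes the squares'', however, is false as written. Since $D_i=(\sqrt G)_{ii}$ scales like $A^{1/2}$, the inequality $\Tr A_i-\Tr D_i^2\le 2(\Tr A_i-\Tr D_i)$ is not homogeneous. It already fails for $N=2$ in dimension one with $A_1=A_2=\tfrac12$: there $G$ is a projector, $\sqrt G=G$ and $D_i=\tfrac12$, so the left side is $\tfrac14$ while the right side is $0$.

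The correct statement is $\Tr A_i-\Tr D_i^2\le 2(\Tr A_i-\Tr A_i^{1/2}D_i)$, i.e.\ $\|A_i^{1/2}-D_i\|_2^2\ge0$. Summed over $i$, it gives $\Err^{\pg}_N\le\|\sqrt G-\Delta^{1/2}\|_2^2$ with $\Delta\coloneqq\diag(A_1,\ldots,A_N)$, which is exactly the paper's first step.

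The real gap is what comes next. ``Expanding to second order in the off-diagonal blocks'' gives an approximation, not an inequality valid for every ensemble, since those blocks need not be small. Applying $H^2\le\chi^2$ to the pairs $(A_i,A_j)$ targets the wrong operators: what must be compared is $\sqrt G$ with $\Delta^{1/2}$, not $\sqrt{A_i}$ with $\sqrt{A_j}$.

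The missing idea is to apply the constant-one Hellinger--$\chi^2$ inequality (Proposition~\ref{prop:H-chi-bound}, which you would also have to prove) \emph{globally}, with $X=G$ and reference $Y=\Delta$. Because $\Delta$ is block diagonal, $\Omega_\Delta^{-1}$ acts on the $(i,j)$ block as $2(L_{A_i}+R_{A_j})^{-1}$. The right-hand side therefore equals $2\sum_{i\ne j}\HS{A_i^{1/2}A_j^{1/2}}{(L_{A_i}+R_{A_j})^{-1}(A_i^{1/2}A_j^{1/2})}$ exactly, with no expansion and no $\chi^2$ of $A_j$ relative to $A_i+A_j$.

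You then need the specific harmonic bound $\HS{X^{1/2}Y^{1/2}}{(L_X+R_Y)^{-1}(X^{1/2}Y^{1/2})}\le\Err^\star(X,Y)$. This is Proposition~\ref{prop:pgm-Helstrom}, proved as follows:
\begin{enumerate}[label=(\roman*)]
\item write $L_X+R_Y=\mathcal T\mathcal T^\dagger$ with $\mathcal T=(L_{X^{1/2}}\ R_{Y^{1/2}})$;
\item note that $\mathcal T^\dagger(\mathcal T\mathcal T^\dagger)^{-1}\mathcal T\le\id$;
\item test this on $W=\Pi^\star Y^{1/2}\oplus X^{1/2}(\id-\Pi^\star)$, which satisfies $\mathcal TW=X^{1/2}Y^{1/2}$ and $\|W\|_2^2=\Err^\star(X,Y)$.
\end{enumerate}
A fidelity-type Fuchs--van de Graaf bound cannot do this job. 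For $X=\diag(1,1)$ and $Y=\diag(1,0)$, the harmonic term is $\tfrac12$. The best fidelity lower bound $\tfrac12\big(s-\sqrt{s^2-4F}\big)$, with $s=\Tr X+\Tr Y=3$ and $F=\|\sqrt X\sqrt Y\|_1^2=1$, is only $\tfrac{3-\sqrt5}{2}\approx0.38$, below the harmonic term.

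\textbf{Sharpness.} Your construction fails in the regime you chose. Fix $\kappa\coloneqq Nc_N$. The exact square-root-measurement success probability is $\big(\tfrac1N\sqrt{1+(N-1)c_N}+\tfrac{N-1}{N}\sqrt{1-c_N}\big)^2$, and $\Err^\star_{ij}=\tfrac1N\big(1-\sqrt{1-c_N^2}\big)$. Together these give $\Err^\star_N/\sum_{i<j}\Err^\star_{ij}\to 8\big(1+\tfrac\kappa2-\sqrt{1+\kappa}\big)/\kappa^2=1-\tfrac\kappa2+O(\kappa^2)$. So for small $|\kappa|$ the ratio is close to $1$ (the nearly orthogonal, union-bound regime), not $4$.

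The value $4$ appears only as $\kappa\to-1$, the boundary of positivity of the Gram matrix. This is the genuine regular simplex $c_N=-1/(N-1)$ spanning $\mathbb C^{N-1}$, where the ratio is exactly $(1+\sqrt{1-2/N})^2$. That is the paper's example. There the PGM's optimality is certified by the dual feasible point $Y=\sum_i M_i^{\pg}A_i=\tfrac1N\id$, although your appeal to optimality of the square-root measurement for geometrically uniform pure states would also suffice.
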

Remark that the coefficient is independent of number of hypotheses, priors, and Hilbert space dimension; its sharpness is guaranteed by a regular-simplex ensemble. The trace-class extension is a corollary of Proposition~\ref{prop:truncation}.

Theorem~\ref{thm:sharp-pairwise} improves the coefficient $C=8$ in the pairwise bound of Cheng and Liu~\cite[Theorem 3.1]{Cheng2026} to $C=4$ in trace-class operators and gives the bound for the standard PGM itself. It also confirms the value suggested by numerical evidence of Audenaert and Mosonyi~\cite[Conjecture 2.3]{audenaert_upper_2014}.
Unlike the construction in Ref.~\cite{Cheng2026}, which relies on sequential measurements combined via a quantum union bound~\cite{Gao2015,ODonnell2022}, 
our proof directly analyzes the PGM through its block Gram matrix representation, retaining a more transparent route and a simpler yet stronger result.

Importantly, this bound is achieved by the standard PGM measurement, which is widely used in quantum information theory as a near-optimal estimation~\cite{BarnumKnill} and also has concrete optical realizations~\cite{Clarke2001,PhysRevLett.118.100501}. 
Another remarkable feature is that the global PGM error is directly controlled by the optimal binary errors, so that the binary errors provide a common benchmark for both the optimal global measurement and the PGM, with the best possible universal coefficient.

Moreover, this bound also yields finite-copy guarantees for the global PGM on an $n$-copy ensemble. We derive explicit sufficient copy numbers in terms of pairwise trace overlaps after Theorem~\ref{thm:pairwise-Chernoff}.
\smallskip

\textit{Proof structure.}---The proof has two complementary parts. We first show that there is no universal coefficient small than four is possible by using a regular-simplex ensemble -- both their binary and global minimum errors can be evaluated explicitly. Then, second, we prove that four suffices for arbitrary ensembles by a block Gram matrix analysis together with two operator inequalities, which we introduce after the sharpness argument.
\begin{proof}[Shapness in Theorem~\ref{thm:sharp-pairwise}]
Let $\{\ket{v_i}\}_{i=1}^N\subset\mathbb C^{N-1}$ be unit vectors forming a regular simplex, i.e., $\braket{v_i|v_j} = -1/(N-1)$ for $i\neq j$, and set $A_i = \ket{v_i}\bra{v_i}/N$. 
Then, the Holevo--Helstrom formula gives the optimal binary error:
\bb
    \Err^{\star}_{ij}=\frac 1N \left( 1-\sqrt{1-\frac{1}{(N-1)^2}} \right).
\ee

For the $N$ ensemble case, consider the dual form of its optimal success probability, namely,
\bb
    \operatorname{Succ}^\star_N = \min \left\{\Tr Y : Y \ge A_i ~~\forall i, \quad Y= Y^\dagger\right\}.
\ee
Therefore, any $Y\ge A_i$ gives an upper bound on $\operatorname{Succ}^\star_N$. By the property of simplex $\sum_{i}A_i = \id/(N-1)$, we have
\bb
    \sum_i M^{\pg}_i A_i 
    = 
    % \sum_i (N-1)A_i^2 
    % =
    % \sum_i \frac{N-1}{N} A_i 
    % = 
    \frac 1N \id \, \ge \, A_i.
\ee
Then, we simply take $Y = \sum_i M^{\pg}_i A_i$ and obtain
\bb
    \operatorname{Succ}^\star_N 
    \le 
    \Tr Y 
    =
    \operatorname{Succ}^{\pg}_N = \frac{N-1}{N}.
\ee
By reverse, we have $\operatorname{Succ}^{\pg}_N\le \operatorname{Succ}^\star_N $; hence $\Err^{\star}_N=1/N$.

Any coefficient $C$ valid uniformly for the optimal global error must satisfy
\bb
    C 
    \ge 
    \frac{\Err^\star_N}{\sum_{i<j}\Err^\star_{ij}} 
    = 
    \left( 1+\sqrt{1-\frac{2}{N}} \right)^{2} \xrightarrow[N\to \infty]{} 4.
\ee
Thus, no universal factor can strictly smaller than 4.
\end{proof}
The same family therefore establishes sharpness for both the optimal global error and the PGM error. Sharpness is only achieved in the limit $N\to\infty$; we leave the determination of the optimal coefficient for each fixed $N$ as an open problem.
\smallskip

\textit{The universal upper bound.}---Having shown that four
is necessary, we now prove that it suffices for the global PGM. The main difficulty in directly analyzing the global PGM is the term $S^{-1/2}$ that depends on all hypotheses simultaneously. A block Gram representation rephrases the PGM error as a sum of Hilbert--Schmidt norms of off-diagonal blocks of a matrix square root, as earlier in Ref.~\cite{montanaro_pretty_2019}. Although each block of the Gram matrix involves only a pair of hypotheses, its square root still composes the entire ensemble. This motivates an upper bound by a form whose weight depends only on the block-diagonal part of the Gram matrix.

Moreover, the Hilbert--Schmidt norm also makes scalar comparisons available. The left- and right-multiplication superoperators commute even when the input operators do not. Under their joint basis, which also underlies the Petz quasi-entropy~\cite[Section 2]{Hiai2011}, allows several key operator inequalities below to be reduced to pointwise scalar inequalities.

Two Propositions are therefore introduced to complete this strategy. First, Proposition~\ref{prop:H-chi-bound} bounds the Hilbert--Schmidt norm by a quadratic form with left- and right-multiplication superoperators determined by the reference operator. Choosing the block-diagonal part of the Gram matrix as the reference makes the associated superoperator act independently on each matrix block, yielding harmonic terms that depend only on individual pairs of hypotheses. 
Second, thereafter, Proposition~\ref{prop:pgm-Helstrom} then bounds each harmonic term by the corresponding optimal binary error, yielding the desired reduction. We establish these Propositions before combining them to prove the upper bound in Theorem~\ref{thm:sharp-pairwise}.

\begin{prop}[(Hellinger--$\chi^2$ inequality)]
\label{prop:H-chi-bound}
Let \(X,Y\ge0\) and define $\Omega_Y\coloneqq (L_Y+R_Y )/2$ with left $L_Y(X)=YX$ and right $R_Y(X)=XY$ multiplications, we have
\bb
    \left\|X^{1/2}- Y^{1/2}\right\|_2^2
    \le
    \HS{X-Y}{\Omega_Y^{-1}(X-Y)},
    \label{eq:H-chi-bound}
\ee 
where $ \|X\|_2^2\coloneqq \HS{X}{X}$ with $\HS{X}{Y}\coloneqq \Tr X^\dagger Y$.
\end{prop}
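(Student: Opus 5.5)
The plan is to rewrite the left-hand side through a factorization of $X-Y$, and then bound the right-hand side from below using the variational formula for the quadratic form of $\Omega_Y^{-1}$. The estimate then reduces to commuting superoperator inequalities, which can be checked pointwise on scalars.

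Set $D\coloneqq X^{1/2}-Y^{1/2}$ and $S\coloneqq X^{1/2}+Y^{1/2}$. A one-line computation gives the symmetrized factorization
\[
X-Y=\tfrac12\bigl(SD+DS\bigr)=\Omega_S(D), \qquad S = 2Y^{1/2}+D .
\]
The target inequality then reads $\|D\|_2^2\le \HS{\Omega_S D}{\Omega_Y^{-1}\Omega_S D}$. Since $\Omega_Y$ is a positive superoperator, at least on the support of $Y$, with the kernel case handled by a limiting argument $Y\to Y+\epsilon\id$, I will use the variational formula
\[
\HS{Z}{\Omega_Y^{-1}Z}=\sup_{W}\Bigl\{2\,\mathrm{Re}\HS{W}{Z}-\HS{W}{\Omega_Y W}\Bigr\}.
\]
The task is then to choose a good test operator $W$ and prove the resulting inequality.

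\emph{Choice of $W$.} I will take $W=\Omega_{Y^{1/2}}^{-1}(D)$, possibly rescaled by a constant to be optimized. Two facts about commuting superoperators handle the pieces.

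\emph{Step 1 (upper bound on the penalty term).} Since $L_{Y^{1/2}}$ and $R_{Y^{1/2}}$ commute and are positive, the pointwise scalar inequalities $\sqrt{ab}\le(a+b)/2$ and $(\sqrt a+\sqrt b)^2\le 2(a+b)$, applied in their joint eigenbasis, give
\[
\Omega_{Y^{1/2}}^2\le\Omega_Y\le 2\,\Omega_{Y^{1/2}}^2 .
\]
This controls $\HS{W}{\Omega_Y W}$ by a multiple of $\|D\|_2^2$.

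\emph{Step 2 (the linear term).} Write $\Omega_S=2\Omega_{Y^{1/2}}+\Omega_D$. This gives
\[
\HS{W}{\Omega_S D}=2\|D\|_2^2+\HS{\Omega_{Y^{1/2}}^{-1}D}{\Omega_D D}.
\]

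\emph{Main obstacle.} The difficulty is the cubic correction $\HS{\Omega_{Y^{1/2}}^{-1}D}{D^2}$, since $D$ need not be positive. This is also why the constant-one claim is sharper than what follows from combining Fact~2.25 and Proposition~2.31 of Ref.~\cite{Flammia2024}. My first attempt is to avoid splitting $S$ at all and instead optimize directly in the $(X,Y)$ spectral picture. Write $X=\sum_a x_aP_a$ and $Y=\sum_b y_bQ_b$. Then
\[
\|D\|_2^2=\sum_{a,b}\Tr(P_aQ_b)\,\bigl(\sqrt{x_a}-\sqrt{y_b}\bigr)^2 .
\]
For the right-hand side, evaluate the variational formula at $W=\sum_a f(x_a)P_a - g(Y)$, with $f$ and $g$ scalar functions chosen so that each term becomes a pairwise scalar expression in $(x_a,y_b)$. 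The inequality then reduces to an elementary two-variable estimate, whose equality case should be $x=y$.

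If the cross terms in $\HS{W}{\Omega_Y W}$ do not decouple, the fallback is the joint convexity of $(Z,Y)\mapsto\HS{Z}{\Omega_Y^{-1}Z}$. I would use it to reduce to the case where $X$ and $Y$ commute, where the claim is the scalar inequality $(\sqrt x-\sqrt y)^2\le (x-y)^2/y$, which holds because $(\sqrt x+\sqrt y)^2\ge y$.
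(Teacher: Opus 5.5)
Your proposal does not yet contain a working argument: each of the three routes fails. The failure is already visible in the equality cases. Besides $X=Y$, the inequality is an equality at $X=0$, where both sides equal $\Tr Y$; you only anticipated $x=y$.

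\emph{(i) The main line.} Step~1 is too lossy: $\Omega_Y\le 2\Omega_{Y^{1/2}}^2$ wastes exactly a factor two on the diagonal of the joint spectrum, which is precisely the factor the proposition removes. Take $1\times1$ matrices, with $d^2=(\sqrt x-\sqrt y)^2$ and cubic term $c=(\sqrt x-\sqrt y)^3/\sqrt y$. The test operator $W=\lambda\,\Omega_{Y^{1/2}}^{-1}D$ combined with Step~1 yields the lower bound $2\lambda(2d^2+c)-2\lambda^2d^2$. Its maximum over $\lambda$ is $\tfrac12(1+\sqrt{x/y})^2d^2$, which is below the target $d^2$ whenever $x<(\sqrt2-1)^2y$; at $x=0$ you get $y/2$ instead of $y$.

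\emph{(ii) The ansatz $W=f(X)-g(Y)$.} It does decouple, because $\HS{W}{\Omega_YW}=\Tr W^2Y$ for Hermitian $W$. The variational bound becomes $\sum_{a,b}\Tr(P_aQ_b)\,[2u_{ab}(x_a-y_b)-y_bu_{ab}^2]$ with $u_{ab}=f(x_a)-g(y_b)$. The resulting two-variable inequality is $2u(x-y)-yu^2\ge(\sqrt x-\sqrt y)^2$, equivalently $y\,(u-\tfrac{x-y}{y})^2\le\tfrac{(x-y)^2}{y}-(\sqrt x-\sqrt y)^2$. This forces $u=0$ at $x=y$ and $u\to-1$ as $x\to0$. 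Hence $f=g$ on $(0,\infty)$ and $g$ is constant there, so $u\equiv0$ for $x,y>0$, which violates the inequality at every $x\neq y$. No choice of $f,g$ works.

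\emph{(iii) The joint-convexity fallback.} Joint convexity does give data processing for the right-hand side, e.g.\ under the pinching $\mathcal P$ in an eigenbasis of $Y$. But the left-hand side is monotone in the same direction: $\Tr X^{1/2}Y^{1/2}$ can only increase under channels, by Lieb concavity. So replacing $X$ by $\mathcal PX$ lowers both sides, and the chain breaks. For $X=\ket{+}\bra{+}$ and $Y=\diag(y_1,y_2)$, the left side drops from $1+y_1+y_2-(\sqrt{y_1}+\sqrt{y_2})$ to $1+y_1+y_2-\sqrt2(\sqrt{y_1}+\sqrt{y_2})$.

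\emph{What the paper does instead.} The missing idea is to keep the quadratic form $\HS{X}{\Omega_Y^{-1}X}$ exactly and discard only nonnegative pieces. Using $\Omega_Y^{-1}(Y)=\id$, the right-hand side equals $\HS{X}{\Omega_Y^{-1}X}-2\Tr X+\Tr Y$. Expand $X=\sum_ix_iE_i^X$ and set $c_{ij}=|\langle x_i|y_j\rangle|^2$. The cross terms $x_ix_k\Tr E_i^X\Omega_Y^{-1}(E_k^X)$, $i\neq k$, are nonnegative and can be dropped, because $\Omega_Y^{-1}(A)=2\int_0^\infty e^{-tY}Ae^{-tY}\,dt$ is positivity preserving.

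Fix $i$, write $x=x_i$, $c_j=c_{ij}$, and $f(t)=\sum_jc_je^{-ty_j}$. What remains is $2x^2\int_0^\infty f(t)^2\,dt-2x+\sum_jc_jy_j$. The AM--GM bound $2e^{-xt}f(t)\le e^{-2xt}+f(t)^2$ shows this is at least $\sum_jc_j(x-y_j)^2/(x+y_j)$. The elementary inequality $(\sqrt x-\sqrt y)^2\le(x-y)^2/(x+y)$ then matches the left side $\sum_{i,j}c_{ij}(\sqrt{x_i}-\sqrt{y_j})^2$.

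Two features are essential and absent from your plan. First, the per-eigenvector estimate is not pairwise in $(x_i,y_j)$; it keeps the full double sum $\sum_{j,k}2c_jc_k/(y_j+y_k)$. Second, the intermediate divergence has denominator $x+y$ rather than $y$. With $y$ in the denominator one gets the Nussbaum--Szko{\l}a classical $\chi^2$-divergence $\Tr X^2Y^{-1}-2\Tr X+\Tr Y$. That is at least the Bures one, since $\tfrac{2}{y_j+y_k}\le\tfrac12\big(\tfrac1{y_j}+\tfrac1{y_k}\big)$, so it cannot serve as an intermediate lower bound.
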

For density operators, the two sides are respectively the squared quantum Hellinger distance and the Bures $\chi^2$-divergence. The coefficient-one inequality improves the coefficient two obtained by combining Fact 2.25 and Proposition 2.31 of Ref.~\cite{Flammia2024}, and it remains valid for positive semidefinite operators. In the block Gram matrix analysis, removing this factor-two loss is essential to obtaining the sharp coefficient four.

\smallskip
\begin{proof}[Proof of Proposition~\ref{prop:H-chi-bound}]
Choose orthonormal eigenbases $E^X_i=\ket{x_i}\bra{x_i},~E^Y_j=\ket{y_j}\bra{y_j}$ and set
\(
    X= \sum_i x_i E_i^X,~
    Y= \sum_j y_j E_j^Y,~
    c_{ij}\coloneqq  \Tr E_i^X E_j^Y = |\braket{x_i|y_j}|^2,
\)
where $x_i\ge 0$ and $y_j>0$. Consider the superoperator $(L_X+R_Y)$ acts on a joint basis $E_{ij}^{XY}\coloneqq |x_i\rangle\langle y_j|$, we have
\bb
    &(L_X+R_Y)\big(E_{ij}^{XY}\big) = (x_i+y_j)\,E_{ij}^{XY}.
    \label{eq:joint-spectrum}
\ee
Given that $(X-Y)_{ij}=(x_i-y_j)\braket{x_i|y_j}$ under this basis, we can therefore consider a scalar inequality
\(
    (\sqrt x-\sqrt y)^2
    \le
    (x-y)^2/(x+y),
\)
which gives
\bb
    \left\|X^{1/2}-Y^{1/2}\right\|^2_2 
    =
    \sum_{ij}(\sqrt{x_i}-\sqrt{y_j})^2c_{ij}
    \le 
    \sum_{ij}\frac{(x_i-y_j)^2}{x_i+y_j}c_{ij}.
    % \\
    % &\le
    % \HS{X-Y}{(L_X + R_Y)^{-1}(X-Y)}.
    \label{eq:rhs-1-ineq-for-prop-hc}
\ee

On the other hand, by linearity and $\Omega_Y^{-1}(Y)=\id$, the RHS of Eq.~\eqref{eq:H-chi-bound} can be written as
\bb
    \big\langle X-Y,& \,\Omega_Y^{-1}(X-Y)\big\rangle\\
    &=
    % \HS{X}{\Omega_Y^{-1}X} - 2\Tr X +\Tr Y\\
    % &=
    \sum_{ik}x_i x_k\Tr E_i^X\Omega_Y^{-1}E_k^X - 2\sum_i x_i +\sum_{ij} y_{j}c_{ij}\\
    &\ge
    \sum_i x_i^2\Tr E_i^X\Omega_Y^{-1}E_i^X - 2\sum_i x_i +\sum_{ij} y_{j}c_{ij}\\
    &=
    \sum_i \left( x_i^2\sum_{jk}\frac{2c_{ij}c_{ik}}{y_j+y_k} -2x_i +\sum_j y_j c_{ij} \right) \\
    &\ge
    \sum_{ij}\frac{(x_i-y_j)^2}{x_i+y_j}c_{ij},\label{eq:prop-for-appendix}
\ee
where the first inequality is because $\Omega_Y^{-1}$ is positive preserving (see the Appendix~\ref{app:positive-preserving}), we can discard all $k\ne i$ term to get the inequality, and the second (scalar) inequality is given in Appendix~\ref{app:scalar-ineq}.
Combining Eqs.~\eqref{eq:rhs-1-ineq-for-prop-hc} and \eqref{eq:prop-for-appendix} proves the claim.
\end{proof}

As an immediate consequence, any estimator with Bures $\chi^2$-error at most $\varepsilon$ with probability at least $1-\delta$ also has squared Hellinger error at most $\varepsilon$ with the same confidence guarantee. Thus, Proposition~\ref{prop:H-chi-bound} removes a factor-two loss in converting between these criteria; see Ref.~\cite[Theorem~1.6 and Corollary~3.24]{Flammia2024} for $\chi^2$-tomography guarantees.

We next compare the blockwise term directly with the optimal binary error. The argument uses a block row superoperator on the Hilbert--Schmidt space.

\begin{prop}[(Link to binary errors)]
\label{prop:pgm-Helstrom}
Let $X,Y\ge 0$, then
\bb
    \HS{X^{1/2}Y^{1/2}}{(L_X+ R_Y)^{-1}(X^{1/2}Y^{1/2})}
    \le
    \Err^\star(X,Y).
    \label{eq:pgm-Helstrom}
\ee
\end{prop}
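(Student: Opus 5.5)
The plan is to reduce \eqref{eq:pgm-Helstrom} to a scalar minimization, working in the same joint eigenbasis used in the proof of Proposition~\ref{prop:H-chi-bound}. Write $X=\sum_i x_i\ket{x_i}\bra{x_i}$ and $Y=\sum_j y_j\ket{y_j}\bra{y_j}$, and put $c_{ij}=|\braket{x_i|y_j}|^2$.

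\textbf{Step 1: evaluate the left-hand side.} The operator $X^{1/2}Y^{1/2}$ has matrix elements $\sqrt{x_iy_j}\,\braket{x_i|y_j}$ in the joint basis $\ket{x_i}\bra{y_j}$. By Eq.~\eqref{eq:joint-spectrum}, $(L_X+R_Y)$ acts diagonally on this basis with eigenvalues $x_i+y_j$, and the inverse is taken on the support. Hence
\[
\HS{X^{1/2}Y^{1/2}}{(L_X+R_Y)^{-1}(X^{1/2}Y^{1/2})}=\sum_{ij}\frac{x_iy_j}{x_i+y_j}\,c_{ij},
\]
where terms with $x_i+y_j=0$ vanish. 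It therefore suffices to show that $\Err^\star(X,Y)$ is at least this sum. Note that the Nussbaum--Szko{\l}a bound $\tfrac12\sum c_{ij}\min(x_i,y_j)$ is too weak here, because $xy/(x+y)\ge \tfrac12\min(x,y)$. A sharper lower bound is needed.

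\textbf{Step 2: sharper lower bound on the binary error.} The Holevo--Helstrom minimum is attained by a projector $\Pi$, namely the positive part projector of $X-Y$. Set $P=\id-\Pi$. Since $P$ and $\Pi$ are projectors, completeness of the $\{\ket{y_j}\}$ basis on $\supp Y$ gives
\[
\Tr XP=\sum_i x_i\|P\ket{x_i}\|^2\ge \sum_{ij}x_i|\bra{x_i}P\ket{y_j}|^2,
\qquad
\Tr Y\Pi=\sum_{ij}y_j|\bra{x_i}\Pi\ket{y_j}|^2 ,
\]
where the second identity uses completeness of the $\{\ket{x_i}\}$ basis, extended to a full orthonormal basis with $x_i=0$ if needed. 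Now write $a_{ij}=\braket{x_i|y_j}$ and $b_{ij}=\bra{x_i}\Pi\ket{y_j}$, so that $\bra{x_i}P\ket{y_j}=a_{ij}-b_{ij}$. Each $(i,j)$ term is then $x_i|a-b|^2+y_j|b|^2$. Minimizing this over $b\in\mathbb C$ gives exactly $\frac{x_iy_j}{x_i+y_j}|a|^2$, a parallel-resistor style quadratic minimization. Summing over $(i,j)$ yields $\Err^\star(X,Y)\ge\sum_{ij}\frac{x_iy_j}{x_i+y_j}c_{ij}$, which combined with Step 1 proves \eqref{eq:pgm-Helstrom}.

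\textbf{Main obstacle.} The delicate point is the bookkeeping in Step 2: the completeness relations must be applied in the correct bases, and the optimal test must be a projector so that $\|P\ket{x_i}\|^2=\bra{x_i}P\ket{x_i}$ holds. In infinite dimension, I would also justify the interchange of the double sums. Since all terms are nonnegative and $X,Y$ are trace class, Tonelli's theorem suffices, with zero eigenvalues and kernels handled by the convention that $(L_X+R_Y)^{-1}$ acts on the support.
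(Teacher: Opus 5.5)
Your proof is correct, and it is essentially the paper's argument written in the joint eigenbasis. The paper takes the preimage $W=\Pi^\star Y^{1/2}\oplus X^{1/2}(\id-\Pi^\star)$ of $X^{1/2}Y^{1/2}$ under $\mathcal T=(L_{X^{1/2}}\;R_{Y^{1/2}})$ and uses $\mathcal T^\dagger(\mathcal T\mathcal T^\dagger)^{-1}\mathcal T\le\id$ to bound the left-hand side by $\|W\|_2^2=\Err^\star(X,Y)$; the entries of this $W$ in the basis $\ket{x_i}\bra{y_j}$ are exactly $\sqrt{y_j}\,b_{ij}$ and $\sqrt{x_i}\,(a_{ij}-b_{ij})$, so your entrywise minimization over $b$ is the coordinate form of that minimum-norm-preimage bound (your version makes the pseudo-inverse and infinite-dimensional bookkeeping explicit, and the projector property you call delicate is not actually needed, since $P\ge P^2$ for any $0\le P\le\id$ turns your equalities into inequalities in the right direction).
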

The above inequality is the unweighted harmonic~\cite{Kubo1980} lower bound of Ref.~\cite[Proposition~B.1]{Cheng2026}. Its role here is to connect the pairwise quadratic terms arising from the block Gram analysis to the binary error. We give a simple and direct inner product proof in the following.

\begin{proof}[Proof of Proposition~\ref{prop:pgm-Helstrom}]
Let $\Pi^\star\coloneqq \{X\ge Y\}$ be the Holevo--Helstrom projection, such that
\(
    \Err^\star(X,Y)
    =
    \Tr X (\id-\Pi^\star)+\Tr Y\Pi^\star.
\)
Define the block row superoperator as
\bb
    \mathcal T
    \coloneqq 
    \begin{pmatrix}
        L_{X^{1/2}} & R_{Y^{1/2}}
    \end{pmatrix},
\ee
such that $\mathcal T\mathcal T^\dagger = L_X + R_Y$. For a block column $W=\Pi^\star Y^{1/2}\oplus X^{1/2}(\id -\Pi^\star)$, we have
\bb
    \mathcal T W
    &=
    X^{1/2}\Pi^\star Y^{1/2} + X^{1/2}(\id -\Pi^\star)Y^{1/2}\\
    &=
    X^{1/2}Y^{1/2}.
\ee
Now, we can rewrite the LHS of Eq.~\eqref{eq:pgm-Helstrom} in terms of $\mathcal T, W$, and $ \mathcal T^\dagger\big(\mathcal T\mathcal T^\dagger\big)^{-1}\mathcal T$ is a self-adjoint projector, implying
\bb
    0
    \le
    \mathcal T^\dagger\big(\mathcal T\mathcal T^\dagger\big)^{-1}\mathcal T
    \le
    \id.
\ee
Acting on left (right) side with $W^\dagger$ ($W$), respectively, we obtain
\(
    \Tr \big(\mathcal TW\big)^\dagger\big(\mathcal T\mathcal T^\dagger\big)^{-1}\big(\mathcal T W\big)
    \le
    \Tr W^\dagger W
\)
and get
\bb
    \HS{\mathcal T W}{\big(\mathcal T\mathcal T^\dagger\big)^{-1}\mathcal T W}
    &\le 
    \left\|W \right\|^2_2\\
    &= 
    \left\|\Pi^\star Y^{1/2} \right\|^2_2 + \left\|X^{1/2}(\id -\Pi^\star) \right\|^2_2\\
    &=
    \Tr Y\Pi^\star + \Tr X(\id-\Pi^\star)\\
    &= \Err^\star(X,Y).
\ee
As the claim, we conclude the proof.
\end{proof}

Now we are ready for the proof of Theorem~\ref{thm:sharp-pairwise}.

\begin{proof}[Proof of Theorem~\ref{thm:sharp-pairwise}]
Define $D\coloneqq \diag(A_1,\ldots,A_N)$ as a block diagonal matrix and 
\(
    T = (A_1^{1/2}A_2^{1/2}\! \ldots \,A_N^{1/2})
\).
Then $TT^\dagger=S$, while $G\coloneqq T^\dagger T$ is the block Gram matrix with $G_{ij}=A_i^{1/2}A_j^{1/2}$. The polar decomposition gives $T^\dagger S^{-1/2}T= (T^\dagger T)^{1/2} =G^{1/2}$. Taking the $(i,j)$ block of $G^{1/2}$, Eq.~\eqref{eq:err_PGM} gives
\bb
    \Err^{\star}_N \le \Err^{\pg}_N
    &=
    \sum_{i\neq j} \left\|A_i^{1/2} S^{-1/2}A_j^{1/2}\right\|^2_2\\
    &= 
    \sum_{i\neq j} \left\|\big(G^{1/2} - D^{1/2} \big)_{ij}\right\|_2^2 \\
    &\le 
    \sum_{i j} \left\|\big(G^{1/2} - D^{1/2} \big)_{ij}\right\|_2^2\\
    &= 
    \left\|G^{1/2} - D^{1/2} \right\|_2^2,
    \label{eq:}
\ee
where $\|X\|_2^2\coloneqq \HS{X}{X}$ and $\HS{X}{Y}\coloneqq \Tr X^\dagger Y$ denotes as the Hilbert--Schmidt inner product. Notice that the second line is because $D^{1/2}$ is block diagonal; the third line is given by the non-negative terms of $i=j$ block. Proposition~\ref{prop:H-chi-bound}, the Hellinger--$\chi^2$ inequality, gives
\bb
    \left\|G^{1/2}-D^{1/2}\right\|_2^2 
    \le
    \HS{G-D}{\Omega_D^{-1}(G-D)},
\ee
where the superoperator $\Omega_D$ acts blockwise. On the $i\neq j$ block, $(G-D)_{ij} \to A_i^{1/2}A_j^{1/2}$ and $\Omega_D \to (L_{A_i}+R_{A_j})/2$. It follows that
\bb
    \big\langle G-D &,\Omega_D^{-1}(G-D)\big\rangle \\
    &=
    2\sum_{i\neq j}
    \HS{A_i^{1/2}A_j^{1/2}}{(L_{A_i}+R_{A_j})^{-1}(A_i^{1/2}A_j^{1/2})}\\
    &\le 
    4\sum_{i< j} 
    \Err^\star(A_i,A_j),
    \label{eq:main-proof}
\ee
where the last line is the Proposition~\ref{prop:pgm-Helstrom}. 
\end{proof}
\smallskip

\textit{One-shot Chernoff bound.}---The harmonic quantity, i.e.\ LHS of Eq.~\eqref{eq:pgm-Helstrom},  provides an intermediate estimate for the global PGM error. A further upper bound in terms to the optimal binary errors gives the sought sharp pairwise reduction. Alternatively, bounding the harmonic quantity directly by trace overlaps leads to the one-shot Chernoff distance below; the same argument also yields a sharp coefficient.

\begin{thm}[(One-shot Chernoff distance)]
\label{thm:pairwise-Chernoff}
For every $A_i,A_j\ge 0$ and $s_{ij}\in(0,1)$, it holds that
\bb
    \Err^{\pg}_N
    \le 
    4\sum_{i<j} \alpha(s_{ij}) \Tr A_i^{1-s_{ij}}A_j^{s_{ij}},
\ee
where $\alpha(s) \coloneqq  s^{s}(1-s)^{1-s} \in [0.5,1)$. The coefficient $\alpha(s_{ij})$ is optimal for all pairs $i,j$ with $i<j$.
\end{thm}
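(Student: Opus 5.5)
The plan is to reuse the first half of the proof of Theorem~\ref{thm:sharp-pairwise} unchanged, and replace only its final step (Proposition~\ref{prop:pgm-Helstrom}) by a bound on the harmonic term in terms of trace overlaps. That earlier proof already gives
\[
\Err^{\pg}_N \le 2\sum_{i\neq j} H_{ij},\qquad H_{ij}\coloneqq \HS{A_i^{1/2}A_j^{1/2}}{(L_{A_i}+R_{A_j})^{-1}(A_i^{1/2}A_j^{1/2})}.
\]
So it suffices to show $H_{ij}\le \alpha(s)\Tr A_i^{1-s}A_j^{s}$ for every $s\in(0,1)$. I will first prove this for finite-rank operators and pass to the trace-class case via Proposition~\ref{prop:truncation}, as was done for Theorem~\ref{thm:sharp-pairwise}.

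\textbf{Reduction to a scalar inequality.} I will diagonalize in the joint basis used in the proof of Proposition~\ref{prop:H-chi-bound}. Write $A_i=\sum_k a_k\ket{x_k}\bra{x_k}$ and $A_j=\sum_l b_l\ket{y_l}\bra{y_l}$, and set $c_{kl}=|\braket{x_k|y_l}|^2$. The operator $A_i^{1/2}A_j^{1/2}$ has matrix entries $\sqrt{a_kb_l}\braket{x_k|y_l}$ in the basis $\ket{x_k}\bra{y_l}$. Moreover, $L_{A_i}+R_{A_j}$ acts diagonally on this basis with eigenvalues $a_k+b_l$, restricted to pairs with $a_k+b_l>0$; pairs with $a_k+b_l=0$ contribute nothing. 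Therefore
\[
H_{ij}=\sum_{k,l}\frac{a_kb_l}{a_k+b_l}\,c_{kl},\qquad \Tr A_i^{1-s}A_j^{s}=\sum_{k,l}a_k^{1-s}b_l^{s}\,c_{kl}.
\]
Since $c_{kl}\ge 0$, it remains to prove the scalar inequality
\[
\frac{ab}{a+b}\le \alpha(s)\,a^{1-s}b^{s}\qquad\text{for } a,b\ge 0.
\]
To do so, put $t=a/(a+b)$. The inequality then becomes $t(1-t)\le \alpha(s)\,t^{1-s}(1-t)^{s}$, which is equivalent to $t^{s}(1-t)^{1-s}\le \alpha(s)$. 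The function $t\mapsto s\log t+(1-s)\log(1-t)$ is concave and is maximized at $t=s$, where its value is $\log\alpha(s)$. This proves the scalar inequality.

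The range $\alpha(s)\in[1/2,1)$ follows from the same one-variable calculus, applied to the entropy-type expression $s\log s+(1-s)\log(1-s)\in[-\log 2,0)$.

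\textbf{Combining over pairs.} Next I will check that $H_{ij}=H_{ji}$. Taking adjoints maps $A_i^{1/2}A_j^{1/2}$ to $A_j^{1/2}A_i^{1/2}$ and exchanges $L_{A_i}+R_{A_j}$ with $R_{A_i}+L_{A_j}$; this symmetry is also visible directly in the scalar formula above. Hence $2\sum_{i\ne j}H_{ij}=4\sum_{i<j}H_{ij}$. Applying the scalar bound to each pair with its own exponent $s_{ij}$ then gives the claim.

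\textbf{Optimality, the main obstacle.} This is the delicate step, because I must fix what ``optimal coefficient'' means. My plan is to show that, for each pair and each $s$, the bound $H_{ij}\le\alpha(s)\Tr A_i^{1-s}A_j^{s}$ cannot hold with any constant smaller than $\alpha(s)$. The witness is commuting rank-one data $A_i=a\ket{0}\bra{0}$ and $A_j=b\ket{0}\bra{0}$ with $a/(a+b)=s$; there the scalar inequality is an equality. I would then try to transfer this tightness to the PGM error itself by a limiting ensemble in which $\Err^{\pg}_N$ is close to $2\sum_{i\neq j}H_{ij}$. The candidate is an ensemble in which the dropped diagonal blocks of $G^{1/2}-D^{1/2}$ and the slack in Proposition~\ref{prop:H-chi-bound} both vanish, for example nearly orthogonal states, where $S^{-1/2}\approx D$-local. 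I expect this last verification to be the hardest part. If it fails, I would state optimality at the level of the harmonic-term bound only, which the scalar equality case settles.
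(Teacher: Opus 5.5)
Your proposal is correct and essentially matches the paper's proof. You reuse $\Err^{\pg}_N\le 2\sum_{i\neq j}H_{ij}$ from the proof of Theorem~\ref{thm:sharp-pairwise}, expand $H_{ij}$ and $\Tr A_i^{1-s}A_j^{s}$ in the joint eigenbasis, and maximize a one-variable function; your $t=a/(a+b)$ is only a reparametrization of the paper's $t=x/y$, and optimality means, as in the paper, the equality case $a/(a+b)=s$ of the scalar bound.

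Two side remarks. First, the detour through Proposition~\ref{prop:truncation} is unnecessary, since the eigenbasis expansion works verbatim for positive trace-class operators; it would also not be valid, because that proposition controls only $\Err^\star$, not $\Err^{\pg}$ or the overlaps. Second, your nearly-orthogonal candidate for tightness of $\Err^{\pg}_N$ itself cannot work. In that perturbative regime the Hellinger--$\chi^2$ step loses a factor between $2$ and $4$ (per pair the ratio is $\tfrac{a+b}{2(\sqrt a+\sqrt b)^2}\le\tfrac12$). By contrast, the regular-simplex ensemble already shows the full bound is asymptotically tight at $s=\tfrac12$, with ratio $(N-1)/N$.
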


In the finite-dimensional one-shot case, Theorem~\ref{thm:pairwise-Chernoff} strengthens the Chernoff branch of~\cite[Theorem~3.1]{Cheng2026} by an additional factor $\alpha(s_{ij})\in [0.5,1)$ in each term, and in addition it controls the error of the global PGM rather than merely that of the globally optimal measurement.

\begin{proof}
Expanding $\HS{X^{1/2}Y^{1/2}}{(L_X+ R_Y)^{-1}X^{1/2}Y^{1/2}}$ and $\Tr X^{1-s}Y^s$ in the basis $E^{XY}_{ij}$, as we did before, we have
\bb
    \sum_{ij}\frac{x_iy_j}{x_i+y_j}c_{ij}
    \le C'
    \sum_{ij} x_i^{1-s}y_j^s c_{ij},
    \label{eq:Chernoff-constant}
\ee
where $C'$ is the constant we wish to find. Now, for a fixed $(i,j)$, dividing both sides by $x^{1-s}y^s$ and changing variable to $t=x/y$, we obtain a concave function with maximum
\bb
    \sup_{t>0} \frac{t^s}{t+1} = (1-s)^{1-s}s^s ,
\ee
attained when $t=s/(1-s)$. Then, taking $C'=(1-s)^{1-s}s^s$ and applying it to Eq.~\eqref{eq:Chernoff-constant}, we conclude the proof.
\end{proof}

As a clear application of Theorem~\ref{thm:pairwise-Chernoff}, we consider a fixed ensemble $\{p_i,\rho_i\}_{i=1}^N$ and let $\Err_N^{\pg}(n)$ denote the error of the PGM associated with $\{p_i,\rho_i^{\otimes n}\}_{i=1}^N$. Taking $s_{ij}=1/2~~\forall i,j$ and setting $\epsilon\coloneqq\max_{i<j}\Tr\sqrt{\rho_i}\sqrt{\rho_j}$, Theorem~\ref{thm:pairwise-Chernoff} gives
\bb
    \Err_N^{\pg}(n) 
    \le 
    2\sum_{i<j}\sqrt{p_ip_j}\, \epsilon^{n}
    \le 
    (N-1)\epsilon^{n}.
    \label{eq:finite-copy}
\ee
Therefore, it suffices to take
\bb
    n\ge
    \left\lceil \frac{\log(N-1) - \log \delta}{\log (1/\epsilon)} \right\rceil
\ee
to ensure the global PGM error at most $\delta \in (0,1)$.
The same estimate for the optimal error follows from Ref.~\cite[Theorem 3.5]{audenaert_upper_2014}; here, it holds directly for the global PGM. Furthermore, it also improves the error in the Chernoff branch of Ref.~\cite[Theorem 3.1]{Cheng2026}, corresponding to an additive reduction of $\log 2/\log(1/\epsilon)$ in the sufficient copy threshold.
Note that the pairwise sum in Eq.~\eqref{eq:finite-copy} can be used directly when the individual overlaps are available, avoiding the replacement by the worst-case pair, while the statement itself serves as a finite budget estimate. \smallskip

\textit{Trace-class extension.}---The preceding results and inequalities are independent of the Hilbert space dimension, suggesting that they should be valid also for infinite-dimensional systems. 
To justify this, we now show the following continuity statement.

\begin{prop}[(Trace-class extension)]
\label{prop:truncation}
In an infinite-dimensional separable Hilbert space, consider an orthonormal basis $\{\ket{k}\}_{k=1}^\infty$ and a finite-cutoff projector $\Pi^m \coloneqq \sum_{k=0}^m \ket{k}\bra{k}$. 
%with $\Pi^m \uparrow \id$. 
Then, for a finite collection of trace-class operators $(A_i)_i$, let $A_i^{(m)}\coloneqq \Pi^m A_i\,\Pi^m$ be the $m$-level truncation of the $i$-th operator. Then
%such that $\lim_{m\to \infty}\|A_i^{(m)} - A_i\|_1=0$, we have
\bb
    \lim_{m\to \infty} \Err^{\star}\big(A_1^{(m)},\ldots,A_N^{(m)}\big)
    =
    \Err^{\star}(A_1,\ldots,A_N).
\ee
\end{prop}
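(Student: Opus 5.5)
The strategy is to show that the optimal error \(\Err^\star\) is Lipschitz in trace norm. Then it suffices to note that \(\|A_i^{(m)}-A_i\|_1\to 0\) for each trace-class \(A_i\).

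\emph{Lipschitz step.} For any POVM \(\{M_i\}\) and any two families \((A_i)\), \((B_i)\) of trace-class operators,
\[
\Bigl|\sum_i \Tr A_i(\id-M_i) - \sum_i \Tr B_i(\id-M_i)\Bigr| \le \sum_i \|A_i-B_i\|_1 .
\]
This holds because \(0\le \id-M_i\le \id\), so \(\|\id-M_i\|_\infty\le 1\), and Hölder's inequality \(|\Tr ZK|\le \|Z\|_1\|K\|_\infty\) applies termwise. Taking the infimum over POVMs on both sides gives
\[
|\Err^\star(A)-\Err^\star(B)|\le \sum_i\|A_i-B_i\|_1 .
\]
Both errors are infima over the \emph{same} set of POVMs on the full Hilbert space. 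Truncated operators are supported on \(\Pi^m\mathcal H\), but they can still be regarded as operators on \(\mathcal H\), so no separate comparison with a finite-dimensional optimization is needed. If one instead wants \(\Err^\star(A^{(m)})\) to be the finite-dimensional quantity, I would add a short remark. Any POVM on \(\mathcal H\) compresses to the POVM \(\{\Pi^m M_i\Pi^m\}\) on \(\Pi^m\mathcal H\) with the same error on the \(A^{(m)}\). Conversely, a POVM on \(\Pi^m\mathcal H\) extends to \(\mathcal H\) by adding \(\id-\Pi^m\) to \(M_1\), which does not change the error on the \(A^{(m)}\). So the two values coincide.

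\emph{Convergence of truncations.} Next I would show that \(\|\Pi^m A\Pi^m - A\|_1\to0\) for trace-class \(A\ge0\). Write
\[
A-\Pi^mA\Pi^m=(\id-\Pi^m)A+\Pi^mA(\id-\Pi^m).
\]
Hölder's inequality with \(A=A^{1/2}A^{1/2}\) gives
\[
\|(\id-\Pi^m)A\|_1\le \|(\id-\Pi^m)A^{1/2}\|_2\,\|A^{1/2}\|_2=\sqrt{\Tr(\id-\Pi^m)A}\,\sqrt{\Tr A}.
\]
The second term is bounded the same way. Since \(\Tr(\id-\Pi^m)A=\sum_{k>m}\langle k|A|k\rangle\) is the tail of a convergent series, it tends to zero. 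For general (non-positive) trace-class \(A_i\), I would either decompose \(A_i\) into four positive parts or use the standard fact that \(\Pi^m\to\id\) strongly implies \(\Pi^mK\Pi^m\to K\) in trace norm for trace-class \(K\). The latter follows by approximating \(K\) with finite-rank operators.

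\emph{Consequence for Theorem~\ref{thm:sharp-pairwise}.} Combining the two steps proves the proposition. To transfer Theorem~\ref{thm:sharp-pairwise}, I would note that the binary errors \(\Err^\star_{ij}\) are the special case \(N=2\), so they also converge under truncation. The finite-dimensional inequality \(\Err^\star_N\le 4\sum_{i<j}\Err^\star_{ij}\) then passes to the limit.

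The main obstacle is only the trace-norm convergence of truncations; every other step is one line. The PGM version would additionally need continuity of \(S^{-1/2}\), which is delicate when \(S\) has small eigenvalues. That is why I would transfer only the optimal-error statement via this proposition. For the PGM itself, the Gram-matrix proof above is already dimension-free and can be run directly in infinite dimension.
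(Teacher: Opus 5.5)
Your proof is correct and follows the paper's route: a POVM-uniform estimate $\bigl|\sum_i \Tr (A_i - B_i)(\id - M_i)\bigr| \le \sum_i \|A_i - B_i\|_1$, obtained from trace-norm duality, which makes $\Err^\star$ Lipschitz in trace norm, combined with $\|A_i^{(m)} - A_i\|_1 \to 0$. Your version is somewhat more complete: you prove the trace-norm convergence of the truncations via the H\"older bound with $A^{1/2}$, and you get existence of the limit directly from the Lipschitz estimate, whereas the paper assumes that convergence and argues with near-optimal POVMs in each direction.
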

\begin{proof}
See Appendix~\ref{app:truncation}
\end{proof}

The trace-class extension 
extends our binary reduction to finite ensembles on bosonic Fock spaces, including Gaussian states with infinite-dimensional support~\cite{weedbrook12}. Consequently, upper bounds on the corresponding binary minimum errors~\cite{pirandola_computable_2008} yield an upper bound on the global minimum error with the same sharp coefficient. This includes discrimination of a finite optical ensemble without imposing photon-number truncations. \smallskip

\textit{Discussion.}---We have established a sharp universal relation between multiple quantum hypothesis testing and its binary sub-problems. In any Hilbert space dimension, the PGM achieves the coefficient four, which we have proved to be optimal for all global measurements.  
We should interpret this as an optimal universal guarantee, rather than a statement that the PGM is optimal for every given ensemble.

Operationally speaking, we have benchmarked the performance of the globally optimal measurement by using its binary reductions.
The accompanying Chernoff bound makes this benchmark directly usable for finite-copy testing problems, providing a sufficient number of copies in terms of pairwise trace overlap.
In addition, the improved Hellinger--$\chi^2$ inequality turns the Bures $\chi^2$-divergence into Hellinger distance without an additional loss, making an available link in estimations.

The developments in this work suggest some natural open questions.  
First, it should be possible to control the full $N$-hypothesis problem through a subset of $k$ hypotheses. Inspired by the regular-simplex ensembles, we conjecture that the optimal coefficients, corresponding to the reduction from $N$ to $k$, of the normalized $k$-wise reduction are non-increasing in $k$ (see Appendix~\ref{app:PGM-pairwise}). Such a hierarchy would allow the subset size to be chosen according to the available sub-ensemble error estimates, potentially sharpening the sufficient number copies under a prescribed error tolerance.
Second, a separate direction is to extend the pairwise reduction to the experimentally relevant family of continuous ensembles. In such a setting, a score function~\cite[Section 3.2]{mishra_near-optimal_2025} is required to connect the pairwise bounds to the estimation of continuous labels, including optical displacements or phases~\cite{weedbrook12}.

\subsection*{Acknowledgements} 
We sincerely thank Hao-Chung Cheng for the invaluable discussion.
K.-Y. L and LL acknowledge financial support from the European Union (ERC StG ETQO, Grant Agreement no.\ 101165230). 

ChatGPT-5.5 and earlier versions were used for literature review and numerical searches for counterexamples and candidate inequalities. These searches helped motivate the universal constant conjecture and inspired some of the proof ideas. GPT-5.6 Sol was also used for proof checking, related-work comparison, and language polishing. All mathematical content was independently verified by the authors.

% \bibliography{ref.bib}

\newpage
\appendix
\begin{widetext}

\section{Proof of positive preserving}
\label{app:positive-preserving}
In the following, we use Sylvester integral representation to show the superoperator $\Omega_Y^{-1}$ is positive preserving.
We note that $\Omega_Y(E_{ij}^Y)=(y_i+y_j)E_{ij}^Y/2$, with $\Omega_Y = (L_Y + R_Y)/2$ and get
\bb
    \Omega^{-1}_Y(E^X_i)
    &=
    \Omega^{-1}_Y\left(\sum_{jk}\bra{y_j}E^X_i\ket{y_k}E^Y_{jk}\right)\\
    &=
    \sum_{jk} \frac{2\bra{y_j}E^X_i\ket{y_k}}{y_j+y_k}E^Y_{jk}\\
    &= 
    \sum_{jk} 2\bra{y_j}E^X_i\ket{y_k}\int_0^\infty e^{-t y_j}E^Y_{jk}e^{-t y_k}\,dt\\
    &= 2\int_0^\infty e^{-tY}E^X_{i}e^{-tY}\,dt \ge 0.
\ee
Positivity follows because each map $E^X_i \mapsto e^{-tY}E^X_i e^{-tY}$ is positive.

\section{The scalar inequality in the proof of Proposition~\ref{prop:H-chi-bound}}
\label{app:scalar-ineq}
Let $y_j>0$, $c_j\geq0$, and $\sum_jc_j=1$. For every $x\geq0$,
\bb
    x^2 \sum_{jk}\frac{2 c_j c_k}{y_j+y_k} -2x +\sum_jy_jc_j
    \geq
    \sum_j\frac{(x-y_j)^2}{x+y_j}c_j.
    \label{eq:scalar-lemma}
\ee

\begin{proof}
The case $x=0$ is an equality. For $x>0$, the forth line in Eq.~\eqref{eq:scalar-lemma} is equivalent to
\begin{equation}
    1+x\sum_{jk}\frac{2c_jc_k}{y_j+y_k}
    \geq
    4\sum_j\frac{x}{x+y_j}c_j.
    \label{eq:scalar-ineq}
\end{equation}
Set $f(t)\coloneqq \sum_j c_j e^{-t y_j}$, then we have
\bb
    1+x\sum_{jk}\frac{2 c_j c_k}{y_j+y_k}
    &=
    1+2x\int_0^\infty f(t)^2\,dt,\\
    4\sum_j\frac{x}{x+y_j}c_j
    &=
    4x\int_0^\infty e^{-xt}f(t)\,dt.
\ee
Applying AMGM inequality
\(
    2e^{-xt}f(t)
    \leq
    e^{-2xt}+f(t)^2
\)
implies
\bb
    4x\int_0^\infty e^{-xt}f(t)\,dt
    &\le
    2x\int_0^\infty \left[ e^{-2xt}+f(t)^2 \right]\,dt\\
    &=
    1+2x\int_0^\infty f(t)^2\,dt,
\ee
which proves Eq.~\eqref{eq:scalar-ineq}.
\end{proof}

\section{Proof of trace-class extension}
\label{app:truncation}

Let us consider an orthonormal basis $\{\ket{i}\}_i$ and define
\bb
    \Pi^m \coloneqq \sum_i^m \ket{i}\bra{i},\qquad \Pi^m \uparrow \id.
\ee
Then, for every $A_i\ge 0$ be a trace-class operator in separable Hilbert space, we let $A_i^{(m)} \coloneqq \Pi^m A_i\,\Pi^m$ such that $\lim_{m\to \infty}\|A_i^{(m)} - A_i\|_1=0$, we have
\bb
    \lim_{m\to \infty} \Err^{\star}(A_1^{(m)},\ldots,A_N^{(m)})
    =
    \Err^{\star}(A_1,\ldots,A_N)
\ee
Here, a finite rank projector $\Pi^{(m)}$ selects a subspace in a finite dimension approximation of the hypothesis $\{A_i\}$. Physically, for instance, such a projector in a single bosonic mode is the Fock basis corresponding to impose a photon number truncation.
\begin{proof}
Fix a POVM $\{M_i\}$, by the property of trace norm~\cite[Lemma 9.5]{NielsenChuang2010}, we obtain
\bb
    \left| \sum_i \Tr A_i (\id - M_i) - \sum_i A^{(m)}_i  (\id - M_i) \right| 
    &=
    \left| \sum_i \Tr \left(A_i-A^{(m)}_i\right)(\id - M_i) \right| \\
    &\le
    \sum_i \left| \Tr \left(A_i-A^{(m)}_i\right )(\id - M_i) \right| \\
    &\le
    \sum_i \left\| A_i-A^{(m)}_i\right\|_1.
    \label{eq:stable}
\ee
Given that Eq.~\eqref{eq:stable} holds $\forall \{M_i\}$, we now let $\varepsilon>0$ such that $\Err^\star(A_1,\ldots,A_N)\ge \sum_i \Tr A_i (\id - M_i)-\varepsilon$; then, we have
\bb
    \Err^\star\left(A^{(m)}_1,\ldots,A^{(m)}_N\right) 
    &\le
    \sum_i \Tr A^{(m)}_i (\id - M_i)\\
    &\le
    \sum_i \Tr A_i (\id - M_i) + \sum_i \left\| A_i-A^{(m)}_i\right\|_1 \\
    &\le
    \Err^\star(A_1,\ldots,A_N) + \varepsilon + N\max_i \left\| A_i-A^{(m)}_i\right\|_1,
\ee
by taking $m\to \infty$ and choose the optimal $\{M_i\}$ such that $\varepsilon \downarrow 0$ for the ensemble $\{A_i\}$, we get
\bb
    \lim_{m\to \infty}\Err^\star\left(A^{(m)}_1,\ldots,A^{(m)}_N\right) \le \Err^\star(A_1,\ldots,A_N).
    \label{eq:epsilon}
\ee
By choosing $\{M_i'\}$ such that $\Err^\star\big(A_1^{(m)},\ldots,A_N^{(m)}\big)\ge \sum_i \Tr A_i^{(m)} (\id - M_i')-\varepsilon'$ and $\varepsilon'\downarrow 0$, we obtain an opposite direction of the inequality in Eq.~\eqref{eq:epsilon}, and hence
\bb
    \lim_{m\to \infty}\Err^\star\left(A^{(m)}_1,\ldots,A^{(m)}_N\right) = \Err^\star(A_1,\ldots,A_N).
\ee
\end{proof}

\begin{proof}
Fix a POVM $\{M_i\}$, we obtain
\bb
    \Big| \sum_i \Tr A_i (\id - M_i) &- \sum_i A^{(m)}_i  (\id - M_i) \Big| \\ 
    &=
    % \Big| \sum_i \Tr \left(A_i-A^{(m)}_i\right)(\id - M_i) \Big| \\
    % &\le
    \sum_i \Big| \Tr \left(A_i-A^{(m)}_i\right )(\id - M_i) \Big| \\
    &\le
    \sum_i \left\| A_i-A^{(m)}_i\right\|_1,
    \label{eq:stable}
\ee
where the last line is by the dual of trace norm.
On the one hand, given that Eq.~\eqref{eq:stable} holds $\forall \{M_i\}$, we now let $\varepsilon>0$ such that $\Err^\star(A_1,\ldots,A_N)\ge \sum_i \Tr A_i (\id - M_i)-\varepsilon$; then, we have
\bb
    \Err^\star\big(A^{(m)}_1&,\ldots,A^{(m)}_N\big) \\
    &\le
    \sum_i \Tr A^{(m)}_i (\id - M_i)\\
    &\le
    \sum_i \Tr A_i (\id - M_i) + \sum_i \left\| A_i-A^{(m)}_i\right\|_1 \\
    &\le
    \Err^\star(A_1,\ldots,A_N) + \varepsilon + N\max_i \left\| A_i-A^{(m)}_i\right\|_1,
\ee
by taking $m\to \infty$ and choose the optimal $\{M_i\}$ such that $\varepsilon \downarrow 0$ for the ensemble $\{A_i\}$, we get
\bb
    \lim_{m\to \infty}\Err^\star\left(A^{(m)}_1,\ldots,A^{(m)}_N\right) \le \Err^\star(A_1,\ldots,A_N).
    \label{eq:epsilon}
\ee
On the other hand, by choosing $\{M_i'\}$ such that $\Err^\star\big(A_1^{(m)},\ldots,A_N^{(m)}\big)\ge \sum_i \Tr A_i^{(m)} (\id - M_i')-\varepsilon'$ and $\varepsilon'\downarrow 0$, we obtain an opposite direction of the inequality in Eq.~\eqref{eq:epsilon}, and hence
\bb
    \lim_{m\to \infty}\Err^\star\left(A^{(m)}_1,\ldots,A^{(m)}_N\right) = \Err^\star(A_1,\ldots,A_N),
\ee
as the claim.
\end{proof}

\section{Pretty-good-measurement pairwise reduction and regular-simplex examples}
\label{app:PGM-pairwise}
In this appendix, we also show that constant $4$ contributes to the PGM error pairwise reduction. We prove a slightly more general $k$-wise lower bound for PGM pairwise reduction, whose case $k=2$ gives the desired sharpness.

\begin{thm}[(The PGM pairwise reduction)]
\label{thm:pgm-pairwise}
For every finite family $A_1,\ldots,A_N\ge 0$, the PGM error satisfies the dimension-independent pairwise bound
\bb
    \Err^{\pg}_N
    =
    4\sum_{i< j}\Err^{\pg}_{ij}.
\ee
\end{thm}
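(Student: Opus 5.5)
As displayed, the statement asserts an equality. That cannot be meant literally: for $N=2$ it would give $\Err^{\pg}_2 = 4\Err^{\pg}_{12}$, which forces $\Err^{\pg}_{12}=0$. Following the sentence that introduces this appendix, I read the statement as the bound
\[
\Err^{\pg}_N \le 4\sum_{i<j}\Err^{\pg}_{ij},
\qquad \Err^{\pg}_{ij}\coloneqq \Err^{\pg}(A_i,A_j),
\]
together with the claim that the coefficient $4$ cannot be lowered uniformly in $N$. The plan has two parts: derive the upper bound from Theorem~\ref{thm:sharp-pairwise}, then get sharpness from the regular-simplex ensemble already used there.

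\emph{Upper bound.} For each pair $(i,j)$, the two-element PGM $\{(A_i+A_j)^{-1/2}A_i(A_i+A_j)^{-1/2},\,(A_i+A_j)^{-1/2}A_j(A_i+A_j)^{-1/2}\}$ is a valid binary test; on $\ker(A_i+A_j)$ it can be completed arbitrarily, since both $A_i$ and $A_j$ vanish there. Its error is therefore at least the Holevo--Helstrom optimum, so $\Err^\star_{ij}\le\Err^{\pg}_{ij}$. Summing over $i<j$ and inserting this into Theorem~\ref{thm:sharp-pairwise} gives
\[
\Err^{\pg}_N\le 4\sum_{i<j}\Err^\star_{ij}\le 4\sum_{i<j}\Err^{\pg}_{ij}.
\]
This holds in any dimension, since Theorem~\ref{thm:sharp-pairwise} does.

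\emph{Sharpness.} Take $A_i=\ket{v_i}\bra{v_i}/N$ with $\braket{v_i|v_j}=-1/(N-1)$ for $i\ne j$, as in the sharpness proof of Theorem~\ref{thm:sharp-pairwise}.
\begin{enumerate}
\item[(i)] The global PGM error is already known: $\Err^{\pg}_N=1-\operatorname{Succ}^{\pg}_N=1/N$.
\item[(ii)] For a pair, $A_i$ and $A_j$ are two pure states with equal weights. Their PGM is the square-root measurement, which for two equiprobable pure states is the Helstrom measurement. Hence $\Err^{\pg}_{ij}=\Err^\star_{ij}=\tfrac1N\bigl(1-\sqrt{1-(N-1)^{-2}}\bigr)$.
\item[(iii)] To confirm (ii) directly, compute in the two-dimensional span of $v_i,v_j$. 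The Gram matrix is $\frac1N\lsmatrix 1&c\\ c&1\rsmatrix$ with $c=-1/(N-1)$. Its square root has off-diagonal entry $\frac{1}{2\sqrt N}\bigl(\sqrt{1+c}-\sqrt{1-c}\bigr)$. Using the Gram-block identity $\Err^{\pg}=\sum_{i\ne j}\|(G^{1/2})_{ij}\|_2^2$ from the proof of Theorem~\ref{thm:sharp-pairwise}, the error is
\[
\frac{1}{2N}\bigl(\sqrt{1+c}-\sqrt{1-c}\bigr)^2=\frac1N\bigl(1-\sqrt{1-c^2}\bigr),
\]
which matches (ii).
\end{enumerate}
The ratio $\Err^{\pg}_N\big/\sum_{i<j}\Err^{\pg}_{ij}$ is then exactly the one computed earlier, $\bigl(1+\sqrt{1-2/N}\bigr)^2$, which tends to $4$ as $N\to\infty$.

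\emph{Main obstacle.} The argument is essentially bookkeeping, so the real difficulty is interpretive. The theorem as printed is an equality, which is false, and the proof must commit to the inequality reading. The only computation needing care is step (iii), the pairwise PGM error on the simplex. If the "$k$-wise" generalization is also intended, I would apply the same Gram-matrix computation to $k$-element sub-simplices: their Gram matrix is $\frac1N\bigl[(1-c)\id+cJ\bigr]$ with $J$ the all-ones matrix, whose square root is explicit. This gives the ratios for general $k$ and recovers $k=2$ as above.
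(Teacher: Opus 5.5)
Your proof is correct and follows the paper's: the bound comes from combining Theorem~\ref{thm:sharp-pairwise} with $\Err^\star_{ij}\le\Err^{\pg}_{ij}$, and you are right that the displayed equality must be read as ``$\le$''. Your regular-simplex sharpness computation is exactly the $k=2$ case of the paper's Proposition~\ref{app:regular-simplex}, giving the ratio $\bigl(1+\sqrt{1-2/N}\bigr)^2\to 4$.
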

\begin{proof}
This result is a direct consequesnce of the pairwise reduction, i.e.,
\bb
    \Err^{\pg}_N \le 4\sum_{i<j} \Err^{\star}_{ij} \le 4\sum_{i<j}\Err^{\pg}_{ij},
\ee
with a trivial inequality $\Err^{\star}_{ij} \le \Err^{\pg}_{ij}$.
\end{proof}

In addition, in the following, we argue that the constant $4$ appears in the PGM pairwise reduction is also sharp. In fact, we prove a slightly more general $k$-wise lower bound, taking the case $k=2$ gives the desired result. 

For a subset $I\subseteq[N]\coloneqq \{1,\ldots,N\}$, write
\(
    \Err^{\pg}_I\coloneqq \Err^{\pg}\big(\{ A_i \}_{i\in I}\big).
\)
For $2\leq k\leq N$, define the normalized $k$-wise PGM error by
\bb
    \overline{\Err}^{\pg}_{N,k}
    \coloneqq 
    \binom{N-2}{k-2}^{-1}
    \sum_{\substack{I\subseteq[N]\\ |I|=k}}
    \Err^{\pg}_I,
    \label{eq:normalized-kwise-pgm-error}
\ee
which gives
\bb
    \overline{\Err}^{\pg}_{N,2}
    =
    \sum_{i< j}\Err^{\pg}(A_i,A_j)
    =
    2\sum_{i< j} \Tr A_i(A_i+A_j)^{-1/2} A_j(A_i+A_j)^{-1/2}.
\ee

\begin{prop}[(PGM error of Regular-simplex)]
\label{app:regular-simplex}
Let $2\leq k\leq N$. Suppose that a constant $C_{N,k}$ satisfies
\bb
    C_{N,k}=
    \inf\left\{
        C \ge 0: \Err^{\pg}_N \le C\,\overline{\Err}^{\pg}_{N,k} ~~\forall A_1,\ldots A_N \ge 0 
    \right\}
\ee
for every collection of positive semidefinite operators
$A_1,\ldots,A_N$. Then
\bb
    C_{N,k}
    \ge\left( 1+\sqrt{1-\frac{k}{N}} \right)^2.
\ee
\end{prop}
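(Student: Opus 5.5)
The plan is to evaluate both sides exactly on the regular-simplex ensemble already used in the sharpness part of Theorem~\ref{thm:sharp-pairwise}. Since $C_{N,k}$ is an infimum over all admissible constants, one explicit ensemble gives
\[
C_{N,k}\ge \Err^{\pg}_N\big/\overline{\Err}^{\pg}_{N,k}.
\]
So take unit vectors $\ket{v_i}\in\mathbb C^{N-1}$ with $\braket{v_i|v_j}=-1/(N-1)$ for $i\neq j$, and set $A_i=\ket{v_i}\bra{v_i}/N$.

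\textbf{Step 1 (global error).} From the sharpness proof, $S=\id/(N-1)$ and $M_i^{\pg}=(N-1)A_i$. Hence
\[
\operatorname{Succ}^{\pg}_N=\sum_i (N-1)\Tr A_i^2=\frac{N-1}{N},
\qquad\text{so}\qquad \Err^{\pg}_N=\frac1N .
\]

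\textbf{Step 2 (error of a $k$-subset).} For pure-state ensembles $A_i=\ket{w_i}\bra{w_i}$, the block Gram representation in the proof of Theorem~\ref{thm:sharp-pairwise} reduces to the scalar Gram matrix $G$. This gives
\[
\operatorname{Succ}^{\pg}_I=\sum_{i\in I}\big((G^{1/2})_{ii}\big)^2 .
\]
For $|I|=k$ we have $G=\frac1N\big(\tfrac{N}{N-1}\id_k-\tfrac1{N-1}J_k\big)$, where $J_k$ is the all-ones matrix. Its eigenvalues are:
\begin{itemize}
\item $\frac{N-k}{N(N-1)}$ on the all-ones vector;
\item $\frac{1}{N-1}$, with multiplicity $k-1$, on its orthogonal complement.
\end{itemize}
By permutation symmetry every diagonal entry of $G^{1/2}$ is the same, namely
\[
(G^{1/2})_{ii}=\frac{d}{\sqrt N},\qquad d\coloneqq \frac{\sqrt{N-k}+(k-1)\sqrt N}{k\sqrt{N-1}} .
\]
Therefore $\Err^{\pg}_I=\frac kN(1-d^2)$, and this value is independent of $I$.

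\textbf{Step 3 (normalization and simplification).} There are $\binom Nk$ subsets, and $\binom Nk\big/\binom{N-2}{k-2}=\frac{N(N-1)}{k(k-1)}$. Hence
\[
\overline{\Err}^{\pg}_{N,k}=\frac{N-1}{k-1}\,(1-d^2),
\qquad
C_{N,k}\ge \frac{k-1}{N(N-1)(1-d^2)} .
\]
Write $a=\sqrt N$ and $b=\sqrt{N-k}$, so that $a^2-b^2=k$. Then
\begin{align*}
k^2(N-1)(1-d^2)
&=k^2(N-1)-\big(b+(k-1)a\big)^2\\
&=(k-1)\big[(k+1)N-k-(k-1)N\big]-\big(N-k+(k-1)N\big)+\dots\\
&=(k-1)\big(a^2+b^2-2ab\big)=(k-1)(a-b)^2 .
\end{align*}
The middle line indicates the expansion step; I expect this to be the only fiddly point, and I would check it carefully by expanding $(b+(k-1)a)^2=b^2+2(k-1)ab+(k-1)^2a^2$ and collecting terms. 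Granting this identity,
\[
C_{N,k}\ge\frac{k^2}{N(a-b)^2}=\frac{(a+b)^2}{N}=\Big(1+\sqrt{1-\tfrac kN}\Big)^2,
\]
where the middle equality uses $(a-b)(a+b)=k$.

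\textbf{Sanity check.} For $k=2$ this is exactly the ratio $\big(1+\sqrt{1-2/N}\big)^2$ computed in the sharpness argument, which confirms the normalization.
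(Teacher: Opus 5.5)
Your proof is correct and follows the paper's route: the same regular-simplex ensemble, the spectral decomposition of $G_I=\frac1N\big(\frac{N}{N-1}\id_k-\frac1{N-1}\mathds J_k\big)$, subset symmetry, and the ratio $\binom Nk/\binom{N-2}{k-2}=\frac{N(N-1)}{k(k-1)}$. The only difference is that you use the diagonal entries of $G_I^{1/2}$ (via the success probability), whereas the paper squares the off-diagonal entry $-\big(1-\sqrt{1-k/N}\big)/\big(k\sqrt{N-1}\big)$, which gives the factor $(a-b)^2$ at once. Your garbled middle line in Step 3 should read $k^2(N-1)-(N-k)-2(k-1)ab-(k-1)^2N=(k-1)(2N-k-2ab)=(k-1)(a-b)^2$, and with that the identity does hold.
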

As a consequence, in particular, for every fixed $k$,
\bb
    \liminf_{N\to\infty}C_{N,k} = 4.
\ee

\begin{proof}[Proof of Proposition~
\ref{app:regular-simplex}]
Let $\{\ket{v_i}\}_{i=1}^N\subset\mathbb C^{N-1}$ be unit vectors forming a regular simplex, i.e., $\braket{v_i|v_j} \coloneqq  \alpha = -1/(N-1)$ with $i\neq j$. We define $A_i \coloneqq \ket{v_i}\bra{v_i}/N$.
For $I\subseteq[N]$, let $G_I$ be a Gram matrix, i.e.,
\bb
    (G_I)_{ij}
    =
    \begin{cases}
        1/N, & i=j,\\
        \alpha/N, & i\neq j.
    \end{cases}
\ee
By the block Gram representation used in the proof of
Theorem~\ref{thm:sharp-pairwise}, we have
\bb
    \Err^{\pg}_I
    =
    \sum_{\substack{ i\neq j\in I}}
    \left|
        \bigl(G_I^{1/2}\bigr)_{ij}
    \right|^2.
    \label{eq:err_I^pg}
\ee
It therefore suffices to compute the off-diagonal entries of
$G_I^{1/2}$. Now, fix $I\subseteq[N]$ with $|I|=k$. Given the symmetry of regular simplex, $\Err^{\pg}_I$ is the same for every such subset, and its Gram matrix reads
\bb
    G_I
    =
    \frac 1N
    \Big(
        (1-\alpha)\id_k+\alpha \mathds J_k 
    \Big),
\ee
where $\mathds J_k$ denotes the $k\times k$ all-ones matrix, which has only one non-zero eigenvalue $k$ with corresponding eigenspace projector $\hat J_k$. Therefore, we get
\bb
    G_I = 
    \frac 1N \Big( (1-\alpha + \alpha k)\, \hat{J}_k + (1-\alpha)\, \hat{J}_k^\perp \Big),
    \qquad
    G_I^{1/2} =
    \sqrt{\frac{N-k}{N(N-1)}} \, \hat J_k
    + \sqrt{\frac{1}{N-1}}\,\hat{J}_k^\perp.
\ee
Every off-diagonal element of $G_I^{1/2}$ is equal to
\bb
    \big(G_I^{1/2}\big)_{i\neq j} 
    &= 
    \frac 1k \left( \sqrt{\frac{N-k}{N(N-1)}} -\sqrt{\frac{1}{N-1}}\right)\\
    &=
    -\frac{1}{k\sqrt{N-1}} \left( 1-\sqrt{1-\frac{k}{N}} \right).
    \label{eq:steps-ref}
\ee
Given that all subsets of cardinality $k$ give the same error, by Eqs.~\eqref{eq:err_I^pg}, we get
\bb
    \Err^{\pg}_I
    &=
    k(k-1)\big(G_I^{1/2}\big)_{i\neq j}^2
    \\
    &=
    \frac{k-1}{k(N-1)}
    \left(
        1-\sqrt{1-\frac{k}{N}}
    \right)^2.
    \label{eq:ij-simplex-pgm-error}
\ee

Since all subsets of cardinality $k$ give the same error, definition in Eq.~\eqref{eq:normalized-kwise-pgm-error} gives
\bb
    \overline{\Err}^{\pg}_{N,k}
    &=
    \binom{N-2}{k-2}^{-1}\binom Nk
    \Err^{\pg}_I
    \\
    &=
    \frac{N(N-1)}{k(k-1)}
    \frac{k-1}{k(N-1)}
    \left(
        1-\sqrt{1-\frac{k}{N}}
    \right)^2
    \\
    &=
    \frac{N}{k^2}
    \left(
        1-\sqrt{1-\frac{k}{N}}
    \right)^2.
    \label{eq:simplex-kwise}
\ee

Analogously, for all the $N$ ensembles, following the same steps form Eqs.~\eqref{eq:err_I^pg} to \eqref{eq:steps-ref} yields
\bb
    \Err^{\pg}_N
    =
    N(N-1)
    \frac{1}{N^2(N-1)}
    =
    \frac 1N.
    \label{eq:full-simplex-pgm-error}
\ee

Combining Eqs.~\eqref{eq:simplex-kwise} and~\eqref{eq:full-simplex-pgm-error}, we get
\bb
    \frac{\Err^{\pg}_N}{ \overline{\Err}^{\pg}_{N,k} }
    &=
    \frac 1N \frac{k^2}{N} \left( 1-\sqrt{1-\frac{k}{N}} \right)^{-2} \\
    &=
    \left( 1+\sqrt{1-\frac{k}{N}} \right)^2.
\ee
Therefore, every constant $C_{N,k}$ satisfying the asserted $k$-wise inequality must obey
\bb
    C_{N,k}
    \ge
    \left(1+\sqrt{1-\frac{k}{N}}\right)^2.
    \label{eq:lower-bound-of-C}
\ee
For fixed $k$, the right-hand side converges to $4$ as $N\to\infty$, which proves the claim.
\end{proof}

\begin{rem}[(Conjecture of a possible $k$-wise hierarchy)]
The example of regular simplex also suggests a possible hierarchy of $k$-wise PGM reductions. 
and let $C_{N,k}$ denote the smallest constant such that
Theorem~\ref{thm:pgm-pairwise} gives $C_{N,2}\leq 4$, whereas $C_{N,N}=1$ holds trivially. The regular-simplex example suggests, for a fixed $N$, the monotonicity:
\bb
    4 \ge C_{N,2}\ge C_{N,3} \ge  \cdots  \ge  C_{N,N}=1,
\ee
where the proof of this hierarchy remain open.
\end{rem}

\end{widetext}

\end{document}